\def\dt{{\rm d}\,}
\def\duzomniejsze{<\kern-.7mm<}
\def\duzowieksze{>\kern-.7mm>}
\def\textbf#1{{\bf #1}}
\def\beq{\begin{equation}}
\def\eeq{\end{equation}}
\def\be{\begin{equation}}
\def\ee{\end{equation}}
\def\ben{\begin{eqnarray}}
\def\een{\end{eqnarray}}
 \def\beqa{\begin{eqnarray}}
\def\eeqa{\end{eqnarray}}
\def\eea{\end{array}}
\def\bea{\begin{array}}
\newcommand{\bei}{\begin{itemize}}
\newcommand{\eei}{\end{itemize}}
\newcommand{\bee}{\begin{enumerate}}
\newcommand{\eee}{\end{enumerate}}
\def\1{\openone}
\def\tr{{\rm Tr}}
\def\>{\rangle}
\def\<{\langle}
\def\dt#1{{{\kern -.0mm\rm d}}#1\,}
\def\squareforqed{\hbox{\rlap{$\sqcap$}$\sqcup$}}
\def\qed{\ifmmode\squareforqed\else{\unskip\nobreak\hfil
\penalty50\hskip1em\null\nobreak\hfil\squareforqed
\parfillskip=0pt\finalhyphendemerits=0\endgraf}\fi}
\newtheorem{lemma}{Lemma}
\newtheorem{theorem}[lemma]{Theorem}
\newtheorem{main result}[lemma]{Main result}
\newtheorem{proposition}[lemma]{Proposition}
\newtheorem{definition}[lemma]{Definition}
\newtheorem{fact}[lemma]{Fact}
\newtheorem{corollary}[lemma]{Corollary}
\newenvironment{proof}[1][Proof]{\textbf{#1.} }{\ \rule{0.5em}{0.5em}}
\def\bep{\begin{proposition}}
\def\eep{\end{proposition}}
\def\bel{\begin{lemma}}
\def\eel{\end{lemma}}
\def\bet{\begin{theorem}}
\def\eet{\end{theorem}}
\def\bed{\begin{definition}}
\def\eed{\end{definition}}
\def\bef{\begin{fact}}
\def\eef{\end{fact}}
\begin{document}

\title{Quantum-correlation breaking channels, broadcasting scenarios, and finite Markov chains}

\author{J.~K.~Korbicz}
   \email{Jaroslaw.Korbicz@icfo.es}
   \affiliation{ICFO-Institut de Ci\`encies Fot\`oniques, Av. Carl Friedrich Gauss 3, 08860 Castelldefels, Barcelona, Spain}
\author{P.~Horodecki}
   \affiliation{Faculty of Applied Physics and Mathematics, Gda\'nsk University of Technology, 80-233 Gda\'nsk, Poland}
   \affiliation{National Quantum Information Centre in Gdan'sk, 81-824 Sopot, Poland}
\author{R.~Horodecki}
   \affiliation{Institute of Theoretical Physics and Astrophysics, University of Gda\'nsk, 80-952 Gda\'nsk, Poland}
   \affiliation{National Quantum Information Centre in Gdan'sk, 81-824 Sopot, Poland}

\date{\today}

\begin{abstract}
One of the classical results concerning quantum channels is the characterization 
of entanglement-breaking channels [M. Horodecki \emph{et al.}, Rev. Math. Phys~{\bf 15}, 629 (2003)]. 
We address the question whether there exists a similar characterization
on the level of quantum correlations which may go beyond entanglement. 
The answer is fully affirmative in the case
of breaking quantum correlations down to the, so called, QC (Quantum-Classical) type, 
while it is no longer true in the CC (Classical-Classical) case. The corresponding channels turn out to be
measurement maps.
Our study also reveals an unexpected link between quantum state and local correlation broadcasting
and finite Markov chains. We present a possibility of broadcasting via non von Neumann
measurements, which relies on the Perron-Frobenius Theorem. Surprisingly, this is not the typical generalized C-NOT gate 
scenario, appearing naturally in this context.
\end{abstract}

\pacs{03.67.Hk, 03.67.Mn, 03.65.Ta, 02.50.Ga}

\keywords{quantum channels, quantum state and correlations broadcasting, finite Markov chains}

\maketitle
There is a well-known result concerning a characterization of entanglement-breaking channels
\cite{Michal,Holevo}. The latter are defined as channels which turn any bipartite state
(when applied to one subsystem) into a separable (non-entangled) one. 
The main result of Ref.~\cite{Michal} states that a channel $\Lambda$ is entanglement breaking
if and only if its Choi-Jamio\l kowski state (i.e. its witness) $\mathds{1}\otimes\Lambda (P_+)$ is a 
separable state ($P_+$ denotes the projector on the maximally entangled state, see Eq.~(\ref{P+})).   
However, it is
known that quantum correlations are more general
than entanglement (see e.g. Ref.~\cite{Modi} and references therein).

To our knowledge, the characterization from Ref.~\cite{Michal}
has not yet been refined to a case when a channel breaks more general quantum correlations, 
i.e. transforms any state into a state that does not possess some type of quantum correlations
(see however Ref.~\cite{Darek} where partial results were obtained).
%More precisely,  transforming it into a CQ state (then the output does not have correlations
%defined by the quantum one-way discord) or by a CC state (then the state
%has no quantum correlation at all). 
Here we show that such a refinement is indeed possible for channels 
mapping (when applied to one subsystem) any bipartite state into a, so called, QC state.
Such channels turn out to be quantum-to-classical measurement maps \cite{broadcasting}.
Moreover, we show that a similar statement does not hold in the case of a stronger 
requirement of fully breaking quantum correlations and 
transforming any bipartite state into a CC form. In the latter case, which
is even more intriguing than the QC one, the corresponding measurement maps are formed by commuting 
Positive Operator Valued Measures (POVMs). 

Our study of QC-type channels leads to an unintuitive and surprising
connection between broadcasting of quantum states \cite{state-broadcasting} and correlations \cite{broadcasting,Luo} 
on one side, and finite Markov chains (see e.g. Ref.~\cite{Markov}) on the other. The existence of a broadcastable state
for a given QC-type channel is guaranteed by the fact that each finite Markov chain, described by a 
stochastic transition matrix \cite{HornJonson}, possesses by the Perron-Frobenius Theorem 
a stationary distribution.
In fact, it happens that there are
maps that may broadcast full rank states and still have the broadcasting 
restricted only to a convex subset of a full commuting family. Similar conclusion works for the 
case of broadcasting of correlations.
 
Recall that a QC (or more precisely Q$_A$C$_B$) state is a bipartite state of a form 
\be\label{def:QC}
\sigma^{QC}=\sum_i p_i \sigma_i^A \otimes |e_i\rangle_B\langle e_i|,
\ee
where $\sigma_i$'s are states at Alice's side, $\{e_i \}$ is an orthonormal basis on Bob's side (possibly different
from the computational basis $\{|i\rangle\}$), and $p_i$'s are probabilities. In the analogous way one defines
a CQ (ore more precisely C$_A$Q$_B$) state, where the classical part (projectors on the orthonormal basis) is located
at Alice's side.

Throughout the work we will always assume that $\Lambda$ is a trace-preserving, completely positive map, i.e. a channel, 
and 
\be\label{P+}
P_+:=|\psi_+\rangle\langle \psi_+|=\frac{1}{d}\sum_{i,j} |ii\rangle\langle jj|
\ee 
is the projector on the maximally entangled state $\psi_+$ and $\{|ij\rangle\}$ is a fixed computational product basis. 
We prove the following
\bet\label{thm:main}
For any channel $\Lambda$ its Choi-Jamio\l kowski state $\mathds{1}\otimes\Lambda (P_+)$ is a QC state if and only if  
$\mathds{1}\otimes\Lambda (\varrho_{AB})$ is a QC state for any bipartite state $\varrho_{AB}$.
\eet
\begin{proof}
We propose to call the above type of channels \emph{QC-type channels}.
In order to setup the notation and methods (cf. Ref.~\cite{Michal}), we present a detailed proof.
In one direction the implication is obvious. To prove it in the other one, 
assume that the state $\mathds{1}\otimes\Lambda (P_+)$ is QC
\be
\mathds{1}\otimes\Lambda (P_+)=\sum_i p_i \sigma_i \otimes |e_i\rangle\langle e_i|.\label{witness}
\ee
From the inversion formula for the Choi-Jamio\l kowski isomorphism \cite{CJ}
\be\label{CJinv}
\Lambda(A)=d\tr_A\big[W_\Lambda(A^T\otimes\mathds{1})\big],
\ee
where $W_\Lambda=\mathds{1}\otimes\Lambda(P_+)$ and the transposition is defined in the computational basis $\{|i\rangle\}$,
it follows that (\ref{witness}) is equivalent to
\be
\Lambda(\varrho)=d\sum_ip_i\tr(\varrho\sigma_i^T )|e_i\rangle\langle e_i|\label{Omega},
\ee
and hence
\be\label{1Lrho}
\mathds{1}\otimes\Lambda(\varrho_{AB})=d\sum_k p_k \tr_B(\varrho_{AB}\mathds{1}\otimes\sigma_k^T)\otimes|e_k\rangle\langle e_k|
\ee
for an arbitrary bipartite state $\varrho_{AB}$. We define unnormalized residual states
\be
\tilde\varrho_k^A:=d\,p_k\tr_B(\varrho_{AB}\mathds{1}\otimes\sigma_k^T),
\ee
and their traces
\be
\tilde p_k:=\tr\tilde\varrho_k^A=d\,p_k\tr_{AB}(\varrho_{AB}\mathds{1}\otimes\sigma_k^T).\label{pAk}
\ee
We show that $\sum_k \tilde p_k =1$. From the assumption  that $\Lambda$ is trace-preserving, it follows that
\beqa
&& \tr_B [\mathds{1}\otimes\Lambda (P_+)]=\frac{1}{d}\sum_{i,j}|i\rangle\langle j|\,\tr\Lambda(|i\rangle\langle j|)\nonumber\\
&& =\frac{1}{d}\sum_{i,j}|i\rangle\langle j|\,\tr|i\rangle\langle j|=\frac{1}{d}\sum_{i}|i\rangle\langle i|=\frac{\mathds{1}}{d}.
\label{TP}\eeqa
On the other hand, the QC assumption (\ref{witness}) implies that
\be
\tr_B [\mathds{1}\otimes\Lambda (P_+)]=\sum_kp_k\sigma_k,
\ee
and consequently
\be
\sum_kp_k\sigma_k=\frac{\mathds{1}}{d}.\label{sum_sigma}
\ee
Thus, the collection $\{dp_i\sigma_i\}$, or equivalently its transposition  
\be\label{Ei}
E_i:=dp_i\sigma_i^T, 
\ee
forms a POVM, which together with Eq.~(\ref{pAk}) implies that
\be
\sum_k \tilde p_k=d\tr_{AB}(\varrho_{AB}\mathds{1}\otimes\sum_kp_k\sigma_k^T)=\tr\varrho_{AB}=1.
\ee
Hence, Eq.~(\ref{1Lrho}) may be rewritten as
\be\label{finish}
\mathds{1}\otimes\Lambda(\varrho_{AB})=\sum_k\tilde p_k \varrho_k^A\otimes |e_k\rangle\langle e_k|,
\ee
with $\varrho_k^A:=\tilde \varrho_k^A/\tr\tilde \varrho_k^A=\varrho_k^A/\tilde p_k$, which is a QC state.
\end{proof}

We remark that Thm.~\ref{thm:main} will not in general be true if one changed the QC state to a CQ one, keeping
the form of the Choi-Jamio\l kowski isomorphism. Indeed, if $\mathds{1}\otimes\Lambda(P_+)
=\sum_i p_i |e_i\rangle\langle e_i|\otimes\sigma_i$, 
then from Eq.~(\ref{CJinv}) it follows that $\Lambda(\varrho)=d\sum_ip_i\langle e^*_i|\varrho|e^*_i\rangle\sigma_i$
and $\mathds{1}\otimes\Lambda(\varrho_{AB})=d\sum_i p_i \tr_B(\varrho_{AB}\mathds{1}\otimes|e^*_i\rangle\langle e^*_i|)\otimes\sigma_i$, 
which is in general a separable state but not a CQ nor QC one.
As an example, consider $\Lambda^{CQ}$ as a von Neumann measurement in the standard basis
on a qubit. Obviously, $\mathds{1} \otimes \Lambda(P_{+})$ is a CQ state, since it is CC.
Now consider a two-qubit state $\varrho_{AB}$ which is an unbiased mixture of the projectors
corresponding to two vectors $|\psi_{+}\rangle=1/\sqrt{2} (|00\rangle +|11\rangle ) $
and $|+ \rangle |0\rangle$ (here $|+ \rangle:=1/\sqrt{2}(|0\rangle +|1\rangle)\,$).
Then $\mathds{1} \otimes \Lambda(\varrho_{AB})=1/2\sum_{i=0,1} \varrho_{i} \otimes |i\rangle \langle i|$,
where $\varrho_{0}:=1/2 (|+\rangle \langle +|  + |0\rangle \langle 0|)$
and $\varrho_{1}:=|1\rangle \langle 1|$. But $[\varrho_{0},\varrho_{1}]\neq 0$,
breaking  the necessary condition for $\mathds{1} \otimes \Lambda(\varrho_{AB})$ to be a CQ state. 

As expected from the general results of Ref.~\cite{Michal} on entanglement breaking channels,
Eqs.~(\ref{Omega}), (\ref{sum_sigma}) and (\ref{Ei})
imply that the action of QC-type channel $\Lambda^{QC}$ consist of a POVM-measurement followed by a state preparation, 
but the preparation is always done in the same orthonormal basis $\{e_i\}$
\be\label{MP}
\Lambda(\varrho)=\sum_i\tr(\varrho E_i)|e_i\rangle\langle e_i|.
\ee
The later plays a role of
a classical register, so that every QC-type channel is in fact a quantum-to-classical measurement map \cite{broadcasting}:
$\Lambda(\varrho)$ gives the state of a measuring apparatus after the measurement of $\{E_i\}$ 
on a system in the state $\varrho$.  
In the light of this observation, Thm.~\ref{thm:main} states that a channel
is a measurement map if and only if (iff) its Choi-Jamio\l kowski state is a QC state.

A natural question arises if one can refine Thm.~\ref{thm:main} even more to the so-called CC states, i.e.
states of a form
\be\label{CC}
\sigma^{CC}=\sum_{i,j}p_{ij}^{AB} |e_i\rangle\langle e_i|\otimes |f_j\rangle\langle f_j|,
\ee
where now $\{e_i\}$ and $\{f_j\}$ are orthonormal bases on Alice's and Bob's side correspondingly, and $p_{ij}$ 
is a classical joint probability distribution. It turns out that as stated, Thm.~\ref{thm:main} does not
specify down to such a case, as even if $\mathds{1}\otimes\Lambda (P_+)$ is a CC state, $\mathds{1}\otimes\Lambda (\varrho_{AB})$
is generically a QC state. To see this, assume that
\be
\mathds{1}\otimes\Lambda (P_+)=\sum_{i,j}p_{ij} |e_i\rangle\langle e_i|\otimes |f_j\rangle\langle f_j|.\label{witness_cc_0}
\ee
From the inversion formula (\ref{CJinv}) one then obtains that
\beqa
&& \Lambda(\varrho)=\sum_j \tr(\varrho E_j)|f_j\rangle\langle f_j|,\label{LCC}\\
&& \mathds{1}\otimes\Lambda (\varrho_{AB})=\sum_j\tr_B(\varrho_{AB}\mathds{1}\otimes E_j)\otimes |f_j\rangle\langle f_j|,\label{witness_cc}
\eeqa
where now
\be\label{Ej}
E_j:=d\sum_i p_{ij} |e_i^*\rangle\langle e_i^*|,
\ee    
and the complex conjugation $e_i^*$ of the basis vectors $e_i$ is defined in the computational basis $\{|i\rangle\}$.
%To justify Es.~(\ref{LCC}) - (\ref{Ej}), we check that
%\beqa
%&& \mathds{1}\otimes\Lambda(P_+)=\sum_{i,j} |i\rangle\langle j|\otimes\sum_{k,l}p_{kl} 
%\langle j|e_k^*\rangle\langle e_k^*| i\rangle|f_l\rangle\langle f_l|\nonumber\\
%&& = \sum_{k,l} p_{kl}\sum_{i,j}\langle j |e_k^*\rangle\langle e_k^*| i\rangle|i\rangle\langle j|\otimes|f_l\rangle\langle f_l|,\\
%\eeqa
%which reproduces Eq.~(\ref{witness_cc_0}).

Similarly to the QC case, trace-preserving property of $\Lambda$ implies that $\{E_j\}$ form a POVM,
$\sum_jE_j=\mathds{1}$ (cf. Eqs.~(\ref{TP})-(\ref{sum_sigma})). 
However, in this case the POVM elements necessarily pairwise commute
\begin{equation}\label{Ejcommute}
[E_j,E_{j'}]=0,
\ee
since by Eq.~(\ref{Ej}) they correspond to a measurement in
one fixed basis, but they need not form a von Neumann measurement, as in general
$E_j$'s may overlap
\be
E_jE_{j'}=\sum_i p_{ij}p_{ij'}|e^*_i\rangle\langle e^*_i|\ne \delta_{jj'}E_j.
\ee
What is quite important is that the POVM condition $\sum_jE_j=\mathds{1}$, puts some constraints on $p_{ij}$:
\be
\sum_{i,j}p_{ij}|e_i^*\rangle\langle e_i^*|=\frac{\mathds{1}}{d} \Rightarrow p_i:=\sum_j p_{ij}=\frac{1}{d},
\ee
which in turn implies that the numbers
\be\label{p'}
p^\Lambda_{\phantom{i}j|i}:=dp_{ij}
\ee
are in fact conditional probabilities: $\sum_jp^\Lambda_{\phantom{i}j|i}=1$ for any $i$. 
Thus, the matrix $P^\Lambda:=[p^\Lambda_{\phantom{i}j|i}]$ is a stochastic matrix \cite{HornJonson} and 
\be\label{Ej'}
E_j=\sum_i p^\Lambda_{\phantom{i}j|i}|e_i^*\rangle\langle e_i^*|.
\ee

From a probabilistic point of view, a stochastic matrix defines a finite Markov chain \cite{Markov}: it provides
transition probabilities between the sites. Hence, with every CC-type channel satisfying (\ref{witness_cc_0}) there is
an associated finite Markov chain and vice versa---with every $d$-site Markov chain
and orthonormal bases $\{e_i\},\{f_i\}$
one can associate a CC-type channel through the formulas (\ref{LCC}) and (\ref{Ej'}). 
In what follows we will also associate a finite Markov chain with a general QC-type channel 
and investigate the consequences for broadcasting of states and correlations.

\begin{figure}[t]
\begin{center}
\includegraphics[scale=0.22]{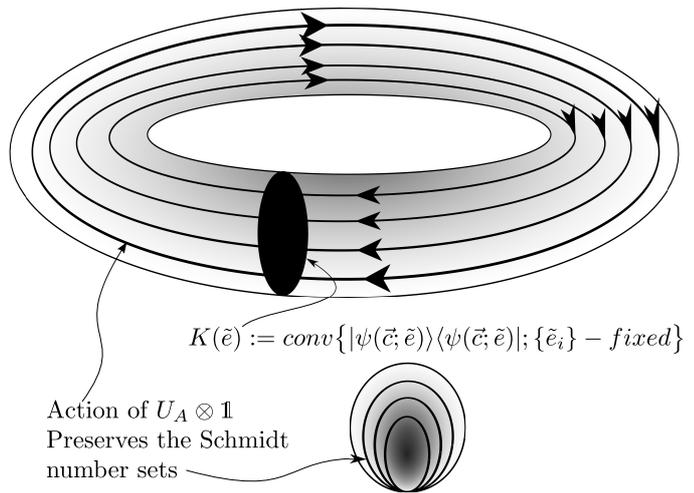}
\caption{\label{fig:pawel}Graphical representation of the set generated by vectors (\ref{Psic}) as a solid torus.
The cross-section represents convex sets $K(\tilde e)$, generated by mixing all the states 
$|\psi(\vec c;\tilde e)\rangle\langle\psi(\vec c;\tilde e)|$ with a fixed Alice's basis $\{\tilde e_i\}$:
$\sum_{\vec c}p(\vec c)|\psi(\vec c;\tilde e)\rangle\langle\psi(\vec c;\tilde e)|$. 
Each $K(\tilde e)$ further contains a hierarchy of convex sets of states with Schmidt number \cite{Maciek}
not greater than $k$, $k=1,\dots,d$.
The action of $U_A\otimes\mathds{1}$ connects different $K(\tilde e)$'s and preserves the Schmidt number sets.}
\end{center}
\end{figure}

The state (\ref{witness_cc}) is obviously a QC state. It will
be a CC state iff there exists a common basis $\{\tilde e_i\}$ such that 
\be\label{cond:cc}
\frac{1}{p_j}\tr_B(\varrho_{AB}\mathds{1}\otimes E_j)=\sum_i p_{i|j}|\tilde e_i\rangle\langle \tilde e_i|,
\ee
for every $j$, where $p_j:=\tr(\varrho_{AB}\mathds{1}\otimes E_j)$ and 
$p_{i|j}:=(1/p_j) \langle \tilde e_i|\tr_B(\varrho_{AB}\mathds{1}\otimes E_j)|\tilde e_i\rangle.$
Condition (\ref{cond:cc}) means that all the Alice residual states, to which Bob steers via his measurement 
\be\label{residual}
\varrho_j^A:=\frac{1}{p_j}\tr_B(\varrho_{AB}\mathds{1}\otimes E_j),
\ee
are simultaneously diagonalizable, or equivalently
\be
[\varrho_j^A,\varrho_{j'}^A]=0 \label{commute}
\ee
for all $j,j'$ (cf. Eq.~(\ref{witness_cc})).

Let us investigate the set $CC(\Lambda)$ of states $\varrho_{AB}$ which solve the above condition, i.e. lead
to a CC state via (\ref{witness_cc}) for a given CC-type channel $\Lambda$. 
We are able to state what follows
\begin{itemize}
\item  Obviously $P_+\in CC(\Lambda)$, by the very assumption (\ref{witness_cc_0}), but  
it also contains mixtures of pure states with the following 
Schmidt decompositions:
\be\label{Psic}
\psi_{AB}(\vec c;\tilde e):=\sum_{i} c_{i}|\tilde e_{i}\rangle_A \otimes |e_{i}^{*}\rangle_B,
\ee
where $\vec c\in\mathbb R_+^d$, $\sum_ic_i^2=1$, $\{\tilde e_i\}$ is some arbitrary basis, and $\{ e^*_i\}$
is the fix basis from Eq.~(\ref{Ej'}). 
Indeed, the states (\ref{residual}) for $|\psi(\vec c;\tilde e)\rangle\langle\psi(\vec c;\tilde e)|$ read:
$p_j\varrho^A_j=\sum_i p^\Lambda_{\phantom{i}j|i}c_i^2 |\tilde e_i\rangle\langle \tilde e_i|$, from which
there appears a stratified structure of convex sets generated by (\ref{Psic}): 
mixing is allowed only within the states with the same, fixed $\{\tilde e_i\}$, thus
generating convex subsets $K(\tilde e)$. Partial unitaries
$U_A\otimes\mathds{1}$ transform between different $K(\tilde e)$'s. 
Furthermore, inside each $K(\tilde e)$ there is a hierarchy of convex sets with increasing 
Schmidt number \cite{Maciek}. This hierarchy is preserved by $U_A\otimes\mathds{1}$.
A schematic representation of this set is given in Fig.~\ref{fig:pawel}.
Note that both $\psi_+$ and its 
local orbit $U_A\otimes U_B\psi_+$ are of the form (\ref{Psic}), as
$U_A\otimes U_B\psi_+=(U_AU_B^T\otimes\mathds{1})\psi_+$ and $U_AU_B^T$ is unitary.
For a general QC-type channel, the states (\ref{Psic}) (for an arbitrary $\{e_i^*\}$) 
will not be in its $CC(\Lambda^{QC})$,
since the residual states $p_j\varrho^A_j=\sum_{i,k} c_ic_k \langle e^*_i|E^{QC}_j e^*_k\rangle|\tilde e_k\rangle\langle \tilde e_i|$
will not in general commute as $E^{QC}_j$'s do not.

\item All CQ (C$_A$Q$_B$) states belong to $CC(\Lambda)$. 
Indeed, substituting into Eq.~(\ref{witness_cc}) an arbitrary C$_A$Q$_B$ state
\be
\varrho_{AB}=\sum_ip_i|\tilde e_i\rangle_A\langle \tilde e_i|\otimes\sigma_i^B, 
\ee
we obtain from Eq.~(\ref{residual}) that
$\varrho_j^A=\sum_i(p_i/p_j)\tr(\sigma_i^B E_j)|\tilde e_i\rangle\langle \tilde e_i|$.
Since $\tr(\sigma_i^BE_j)=p_{j|i}$ is the conditional probability of obtaining result $j$
when measuring POVM $\{E_j\}$ in the state $\sigma_i^B$, from Bayes Theorem
$(p_i/p_j)\tr(\sigma_i^B E_j)=p_{i|j}$ is the needed conditional probability (cf. Eq.~(\ref{cond:cc})).
A schematic representation of the set of CQ states is given in Fig.~\ref{fig:cq}.
For a general QC-type channel, CQ states 
are also in its $CC(\Lambda^{QC})$.
\item Similarly to the set of all CC states, $CC(\Lambda)$ is not convex, which is easily seen from 
the bi-linearity of the condition (\ref{commute}),
but is star-shaped with respect to the maximally mixed state $\mathds{1}/d^2$: if $\varrho_{AB}\in CC(\Lambda)$, then 
\be
\tilde\varrho_{AB}:=\lambda\varrho_{AB} +(1-\lambda)\frac{\mathds{1}_A\otimes\mathds{1}_B}{d^2}\in CC(\Lambda). 
\ee
This follows immediately form (\ref{residual}), as
$\tilde p_j\tilde\varrho_j^A=\lambda p_j\varrho_j^A + (1-\lambda) (\tr E_j)\mathds{1}/d^2$
and $\tilde\varrho_j^A$ pairwise commute iff $\varrho_j^A$ do so. The same is true for $CC(\Lambda^{QC})$
for a general CQ-type channel.
\end{itemize}

\begin{figure}[t]
\begin{center}
\includegraphics[scale=0.22]{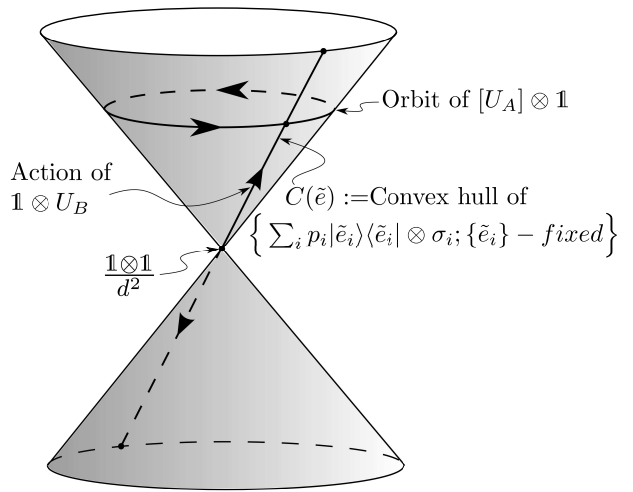}
\caption{\label{fig:cq}Graphical representation of the set of CQ states as a conical surface. 
The generators of the cone represent convex subsets $C(\tilde e)$, obtained 
by mixing all the states of the form $\sum_ip_i|\tilde e_i\rangle\langle \tilde e_i|\otimes\sigma_i$ 
with a fixed Alice's basis $\{\tilde e_i\}$. The local group $\mathds{1}\otimes U_B$ acts along each such a subset.
Different subsets are connected by the action of $[U_A]\otimes \mathds{1}$, where $[U_A]$ denotes the class of $U_A$ modulo a permutation matrix
(evidently the action of Alice's permutations conserve each $C(\tilde e)$). The whole set is star-shaped with respect to 
$(\mathds{1}\otimes\mathds{1})/d^2$.}
\end{center}
\end{figure}

We do not know at this stage if the above conditions fully characterize $CC(\Lambda)$ for a given 
 $\Lambda^{CC}$ and we 
postpone the question of its full characterization for a future research. 
Obviously, one can define the set $CC(\Lambda)$ for any channel $\Lambda$,
however in the light of Thm.~\ref{thm:main} for QC- and CC-type channels  it possesses an interesting interpretation:   
If we think of Alice and Bob as of Environment and System respectively, then 
$CC(\Lambda)$ is the set of those initial System-Environment  states $\varrho_{AB}$ that after the measurement,
described by Thm.~\ref{thm:main} by every $\Lambda^{QC}$, and tracing out the System
lead to Apparatus-Environment states with no quantum correlations, i.e. the Apparatus becomes quantumly 
de-correlated from the Environment.

We now investigate if a QC-type channel $\Lambda^{QC}$
can be used (after a modification)
for state broadcasting \cite{state-broadcasting}. 
We first study a relaxed scenario where we broadcast only eigenvalues, or in other words a classical
probability distribution:
For a given state $\varrho_*$ we are looking for a broadcast state $\sigma_{AB}$ such that 
$U_A\tr_B\sigma_{AB}U_A^\dagger=\varrho_*=U_B\tr_A\sigma_{AB}U_B^\dagger$ for some unitaries $U_A, U_B$.
We will call such a relax broadcasting \emph{spectrum broadcasting} and 
the usual state broadcasting in the sense of Ref.~\cite{state-broadcasting}---\emph{full broadcasting}.
In what follows we prove
\bet\label{thm:broadcast}
For any QC-type  channel $\Lambda^{QC}$ and any orthonormal basis $\{\phi_j\}$ there exists at least one state
$\varrho_*(\phi)$, diagonal in $\{\phi_j\}$, which is $N$-copy spectrum-broadcastable using $\Lambda^{QC}$. 
The state $\varrho_*(e)$, diagonal in the channel's basis
$\{e_j\}$ (cf. Eq.~(\ref{witness})), is also $N$-copy fully broadcastable.
\eet
\begin{proof}
By Thm.~\ref{thm:main} and Eq.~(\ref{MP}) every QC-type channel is a quantum-to-classical 
measurement map. A sufficient condition for spectrum-broadcastability of a state
\be\label{rho*}
\varrho(\phi):=\sum_j\lambda_{j}(\phi)|\phi_j\rangle\langle \phi_j|
\ee
is then that its eigenvalues $\vec\lambda(\phi)$ are preserved by the measurement, i.e.
\be\label{suff}
\tr\big(\varrho(\phi)E_i\big)=\lambda_{i}(\phi)
\ee
for every $i$. This is equivalent to the following eigenvalue problem
\be\label{broadcast}
\sum_jp_{i|j}(\phi)\lambda_{j}(\phi)=\lambda_{i}(\phi)
\ee
for a $d\times d$ stochastic matrix 
\be\label{pphi}
P(\phi):=[p_{i|j}(\phi)],\quad p_{i|j}(\phi):=\langle\phi_j|E_i \phi_j\rangle.
\ee
That this is a stochastic matrix, or equivalently a matrix of conditional probabilities,
follows from the fact that $E_i$'s form a POVM by Eqs.~(\ref{sum_sigma}) and (\ref{Ei}): 
\be
\sum_ip_{i|j}(\phi)=\langle\phi_j|\Big(\sum_iE_i\Big) \phi_j\rangle
=\langle\phi_j|\phi_j\rangle=1
\ee
for every $j$. By the celebrated Perron-Frobenius Theorem \cite{HornJonson}
the above eigenvalue problem (\ref{broadcast}) has at least one non-negative, normalized solution $\vec\lambda_*(\phi)$,
from which we construct through Eq.~(\ref{rho*}) the desired state $\varrho_*(\phi)$. 
Moreover, this solution is unique iff the matrix $P(\phi)=[p_{i|j}(\phi)]$ is \emph{primitive}, 
i.e. is irreducible and possesses exactly one eigenvector of the maximum modulus (equal to $1$ in 
our case), which in turn is equivalent to that 
all the entries of the $(d^2-2d+2)$-th power of $P(\phi)$ are non-zero \cite{HornJonson}. 
We now construct from $\Lambda^{QC}$ a new channel (cf. Eq.~(\ref{MP}))
\be\label{LN}
\Lambda^{(N)}(\varrho):=\sum_i \tr(\varrho E_i)|e_i\rangle\langle e_i|\otimes\cdots\otimes|e_i\rangle\langle e_i|,
\ee
which by condition (\ref{suff}) $N$-copy spectrum-broadcasts the 
state $\varrho_*(\phi)$ (or equivalently $N$-copy broadcast its eigenvalues).

Since the basis $\{\phi_j\}$ above was arbitrary, we obtain from the Perron-Frobenius Theorem that
there exists a spectrum-broadcastable state in any basis (the states in different bases can be equal though, e.g. when
the bases differ only by a permutation). For the basis $\{e_i\}$, associated with
$\Lambda^{QC}$ by the QC-condition (\ref{witness}), the corresponding state $\varrho_*(e)$ will be a fixed
point  of $\Lambda^{QC}$: $\Lambda^{QC}(\varrho_*(e))=\varrho_*(e)$ by Eqs.~(\ref{MP}) and (\ref{suff}). Thus 
$\Lambda^{(N)}(\varrho_*(e))=\sum_j \lambda_{*j}(e)|e_j\rangle\langle e_j|\otimes\cdots\otimes|e_j\rangle\langle e_j|$
is a full $N$-copy broadcast state of $\varrho_*(e)$.
\end{proof}

All the above obviously applies to CC-type channels, as a subclass of QC-type ones.
However, as already mentioned, with any CC-type channel $\Lambda$ there is a naturally associated stochastic matrix $p^\Lambda_{\phantom{i}j|i}$ through Eqs.~(\ref{witness_cc_0},\ref{p'}), without a need of an additional basis ($\{e^*_i\}$
of Eq.~(\ref{Ej'}) plays its role). The corresponding solution $\vec\lambda^\Lambda_{*}\equiv\vec\lambda_*(e^*)$ of (\ref{broadcast}), 
$\sum_ip^\Lambda_{\phantom{i}j|i}\lambda^\Lambda_{*i}=\lambda^\Lambda_{*j}$, and the state $\varrho^\Lambda_*\equiv\varrho_*(e^*)$ are now intrinsic 
characteristics of the channel. Note that 
for an arbitrary basis $\{\phi_j\}$, Eq.~(\ref{broadcast}) reads
\be
\sum_{i,k} p^\Lambda_{\phantom{i}j|i}|U_{ik}|^2\lambda_{k}(\phi)=\lambda_j(\phi),
\ee 
where $\phi_j=:Ue_j^*$ and $|U_{ik}|^2:=|\langle e^*_i|U e^*_k\rangle|^2$ is a doubly-stochastic matrix.
By the Birkhoff Theorem every such a matrix is a convex combination of at most $d^2-2d+2$ distinct permutation
matrices $P_\sigma$, $\sigma\in\mathfrak S_d$ \cite{HornJonson} and hence
\be
p_{i|j}(\phi)=\sum_{\sigma\in\mathfrak S_d}p_\sigma \sum_kp^\Lambda_{\phantom{i}i|k}(P_{\sigma})_{kj}
=\sum_{\sigma\in\mathfrak S_d}p_\sigma p^\Lambda_{\phantom{i}i|\sigma^{-1}(j)},
\ee
while for a general QC-type channel there will also be a ``coherent'' part:
\be
p_{i|j}(U\phi)=\sum_{\sigma\in\mathfrak S_d}p_\sigma p_{\phantom{i}i|\sigma^{-1}(j)}(\phi)
+\sum_{k\ne l}U_{kj}^*U_{lj}\langle k|E_i|l\rangle.
\ee

The existence of fully broadcastable state(s) $\varrho_*(e)$ for any QC-type channel is in some way surprising,
as the measurements described by such channels are in general not von Neumann measurements, but POVMs (cf. Eq.~(\ref{MP})).
The existence of a whole family of spectrum-broadcastable states is perhaps even more surprising. Note, however,
that spectrum-broadcastability is a far weaker condition than full state broadcasting.
By the same reason, although the broadcasting channel $\Lambda^{(N)}$ is the same for every basis---it depends only
on $\Lambda$, we do not contradict the no-go theorem for state broadcasting from Ref.~\cite{state-broadcasting}.

From a probabilistic point of view, the existence of (spectrum-)broadcastable states
follows form the fact that one can associate a finite Markov process with the problem 
%(and less naturally with any basis and a QC-type) channel
through Eq.~(\ref{pphi}), 
and by the Perron-Frobenius Theorem each such a process possesses a 
stationary distribution. The (spectrum-)broadcastable states are constructed precisely from this
distribution.

Let us continue the above analysis and study the implications of the Ergodic Theorem for finite Markov chains \cite{HornJonson}: 
For a stochastic matrix $P$, there exists a limit $P^{\infty}:=\lim_{r\to\infty}P^r$ iff $P$ is primitive. 
The limit is given by 
\be\label{assymptotic}
P^\infty_{ij}=\lambda_{*i}1_j,\
\ee 
where $\lambda_{*i}$ is the stationary distribution (Perron vector) of $P$ (cf. Eq.~(\ref{broadcast})) and $\vec 1:=(1,\dots,1)$.
Note that the limiting matrix elements are the same for each column index $i$:
Asymptotically the probability for the process to be at site $j$ does not depend on the initial site $i$.
As a consequence, the limiting distribution of the process $p^\infty_i:=\sum_jP^\infty_{ij}p_{j}$ will not depend on 
the initial distribution $p_{j}$: 
\be\label{const}
\sum_jP^\infty_{ij}p_{j}=\lambda_{*i}.
\ee
Consider now the $r$-th power of a QC-type channel $\Lambda$: 
\be\label{Lr}
\Lambda^r(\varrho)=\sum_{i,j}P(e)^{r-1}_{ij}\tr(\varrho E_j)|e_i\rangle\langle e_i|, 
\ee
where $P(e)$ is defined through Eq.~(\ref{pphi}). By the Ergodic Theorem, the limit $\lim_{r\to\infty}\Lambda^r=:\Lambda^\infty$ 
exists iff the matrix $P(e)$ is primitive. By Eqs.~(\ref{assymptotic}) and (\ref{Lr}), $\Lambda^\infty$ is then 
a constant channel, analogously to (\ref{const})
\be
\Lambda^\infty(\varrho)=\varrho_*(e)
\ee
for any state $\varrho$. Indeed, $\Lambda^\infty(\varrho)=
\sum_{i,j} P(e)^\infty_{ij}\tr(\varrho E_j)|e_i\rangle\langle e_i|
=(\tr\varrho) \sum_{i} \lambda_{*i}  |e_i\rangle\langle e_i|=\varrho_*(e)$ (cf. Eq.~(\ref{rho*})).
As a consequence, $\Lambda^\infty$ breaks all correlations:
$\mathds{1}\otimes\Lambda^\infty(\varrho_{AB})=\varrho_B\otimes\varrho_*$, $\varrho_B:=\tr_B\varrho_{AB}$.

An interesting situation arises when Eq.~(\ref{broadcast}) has more than one solution,
i.e. when a QC-type channel $\Lambda^{QC}$ (spectrum-)broadcasts \cite{note} more than one state.
Probabilistically, this means that the Markov process, corresponding to $\Lambda^{QC}$ and a context
$\{\phi_i\}$ through Eq.~(\ref{pphi}), possesses more than one stationary distribution.
This happens when the process splits into two or more disconnected processes. 
Algebraically this means that the transition matrix $P(\phi)=[p_{i|j}(\phi)]$ 
%with the following property $p_{i,j}> 0$ fo all indices
%($(i,j)\in {1,...,k}$ and $(i,j)\in {k+1,...,d}$. This matrix forms a 
is, modulo a column permutation, a direct sum of two or more primitive stochastic matrices
\begin{equation}
P_{d \times d}(\phi) = P_{k \times k}^{(1)}(\phi) \oplus P_{(d-k) \times (d-k)}^{(2)}(\phi).
\label{P-matrix}
\end{equation}
%By Eq.~(\ref{Ej'}) this is equivalent to a decomposition of the corresponding POVM 
%$\{E_j\}$ into two groups $E^{(1,2)}_j=\sum_ip'^{(1,2)}_{j|i}|e_i^*\rangle\langle e_i^*|$ 
%with disjoint supports: $E^{(1)}_jE^{(2)}_k=0$.
According to the Perron-Frobenius Theorem, each of the blocks has a unique 
Perron vector $\vec{\lambda}_*^{(1)}(\phi)$,
$\vec{\lambda}_*^{(2)}(\phi)$ correspondingly (each of them is normalized). 
Clearly, any $d$-dimensional vector of the form
$\vec{\lambda}_*= p \vec{\lambda}_*^{(1)}\oplus (1-p) \vec{\lambda}_*^{(2)}$ 
is again an eigenvalue-$1$ eigenvector of $P(\phi)$
for any $p \in [0,1]$. We shall denote the corresponding 
states by $\varrho_*^{(1)}(\phi):=diag[\lambda^{(1)}_{1},..,\lambda^{(1)}_{k},0,..,0]$,
$\varrho_*^{(2)}(\phi):=diag[0,..,0,\lambda^{(2)}_{k+1},..,\lambda^{(2)}_{d}]$. 
%and the states are
%diagonal in the considered basis $\{\phi_i\}$.
%the same basis where the POVMs $\{E_j\}$ of Eq.~(\ref{Ej'}) are.
This is an example of the case where any state from the convex combination 
$p\varrho_*^{(1)} + (1-p)\varrho_*^{(2)}$
can be (spectrum-)broadcasted. Clearly, this example generalizes to more than 
a binary combination of states if the matrix $P(\phi)$ decomposes 
into more than two components: if the number of terms (degeneracy) in Eq.~(\ref{P-matrix})
is $D$, there exists a $D$-dimensional simplex of states (spectrum-)broadcastable by
$\Lambda^{QC}$ (cf. Eq.~(\ref{LN})). The most degenerate case is of course when $D=d$, i.e. when the transition 
matrix $P(\phi)=\mathds{1}$, so that the Markov process is trivial---there are no transitions between the sites,
which happens when the POVM is in fact a von Neumann measurement in $\{\phi_i\}$: $E_i=|\phi_i\rangle\langle \phi_i|$.
  
One can continue the above analysis and consider local broadcasting of correlations.
From the general No-Local-Broadcasting Theorem from Ref.~\cite{broadcasting}, we know that the only locally 
broadcastable states are the CC ones. 
Let us thus consider a family of CC states, build from the stationary solutions $\varrho_*^{(m)}(\phi)$ corresponding 
to a degenerate transition matrix $P(\phi)$:
\beqa\label{CCfamily}
&& \varrho_{*AB}(\pi;\phi):=\sum_{m,n=1}^{D} \pi_{mn} \varrho_*^{(m)}(\phi) \otimes \varrho_*^{(n)}(\phi)\nonumber\\
&& =\sum_{i,j=1}^d\sum_{m,n=1}^D\pi_{mn}\lambda_{*i}^{(m)}\lambda_{*j}^{(n)}|\phi_i\rangle\langle \phi_i|\otimes|\phi_j\rangle\langle \phi_j|.
\eeqa 
Applying to $\varrho_{*AB}(\pi;\phi)$ the product channel $\Lambda^{(N)} \otimes \Lambda^{(N)} $, 
where $\Lambda^{(N)}$ is defined in (\ref{LN}), one achieves 
a local $N$-copy (spectrum-)broadcasting  \cite{note} of the classical correlations:
$[\Lambda^{(N)} \otimes \Lambda^{(N)}]\varrho_{*AB}(\pi;\phi)=\sigma_{A_1\dots A_N B_1\dots B_N}(\pi;\phi)$ and
all the bipartite reductions $\sigma_{A_rB_r}(\pi;\phi)$ are (unitary equivalent/)equal to $\varrho_{*AB}(\pi;\phi)$.
We present a concrete example of this broadcasting scheme in the Appendix, Eqs.~(\ref{P-1},\ref{P-2}),
while a version with two different channels will be studied in what follows.

Let us now assume that two different channels $\Lambda_A, \Lambda_B$ satisfy the assumptions of Thm.~{\ref{thm:main} on
Alice's and Bob's side respectively, i.e.
\beqa
&& \Lambda_A\otimes \mathds{1} (P_+)=\sum_i p_i^A |e_i\rangle_A\langle e_i|\otimes \sigma_i^B,\label{LA}\\
&& \mathds{1}\otimes\Lambda_B (P_+) =\sum_j p_j^B \sigma_j^A\otimes |f_j\rangle_B\langle f_j|.\label{LB}
\eeqa
Then one easily proves
\begin{corollary}
If $\Lambda_A\otimes \mathds{1} (P_+)$ and $\mathds{1}\otimes\Lambda_B(P_+)$ are C$_A$Q$_B$ and Q$_A$C$_B$ states respectively, then 
$\Lambda_A\otimes\Lambda_B(\varrho_{AB})$ is a CC state for any state $\varrho_{AB}$. 
\end{corollary}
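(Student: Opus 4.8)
The plan is to reduce the statement to Thm.~\ref{thm:main} applied separately to each of the two local channels, and then to evaluate the product action on an arbitrary bipartite state. First I would note that hypothesis~(\ref{LB}), i.e. that $\mathds{1}\otimes\Lambda_B(P_+)$ is a Q$_A$C$_B$ state, is exactly the assumption~(\ref{witness}) of Thm.~\ref{thm:main}. Hence $\Lambda_B$ is a QC-type channel and, by Eq.~(\ref{MP}), it acts as a measurement map
\[
\Lambda_B(\varrho)=\sum_j\tr(\varrho E_j^B)|f_j\rangle\langle f_j|,
\]
where $\{E_j^B\}$ is a POVM on Bob's side (cf. Eqs.~(\ref{sum_sigma}),~(\ref{Ei})). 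For $\Lambda_A$ the situation is the mirror image: hypothesis~(\ref{LA}) says that $\Lambda_A\otimes\mathds{1}(P_+)$ is a C$_A$Q$_B$ state. Since $P_+$ is invariant under the swap $S$ of the two subsystems ($S|ii\rangle=|ii\rangle$, whence $SP_+S=P_+$), conjugating by $S$ turns this C$_A$Q$_B$ condition into precisely the Q$_A$C$_B$ hypothesis of Thm.~\ref{thm:main} for the channel acting on the second factor. Therefore $\Lambda_A$ is equally a measurement map,
\[
\Lambda_A(\varrho)=\sum_i\tr(\varrho E_i^A)|e_i\rangle\langle e_i|,
\]
with $\{E_i^A\}$ a POVM on Alice's side.

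Next I would simply compute the action of the product channel on an arbitrary bipartite state $\varrho_{AB}$. Inserting the two measurement-map forms and using bilinearity of the tensor product one obtains
\[
\Lambda_A\otimes\Lambda_B(\varrho_{AB})=\sum_{i,j}\tr\big(\varrho_{AB}\,E_i^A\otimes E_j^B\big)\,|e_i\rangle\langle e_i|\otimes|f_j\rangle\langle f_j|.
\]
Setting $p_{ij}:=\tr(\varrho_{AB}\,E_i^A\otimes E_j^B)$, non-negativity of $p_{ij}$ follows from $\varrho_{AB}\ge 0$ and $E_i^A\otimes E_j^B\ge 0$, while the two POVM conditions give $\sum_{i,j}p_{ij}=\tr[\varrho_{AB}(\sum_iE_i^A)\otimes(\sum_jE_j^B)]=\tr\varrho_{AB}=1$. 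Thus $\{p_{ij}\}$ is a genuine joint probability distribution, and the output has exactly the CC form~(\ref{CC}), which is the claim.

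There is essentially no analytic obstacle here: once both local maps are identified as measurement maps through Thm.~\ref{thm:main}, the conclusion is immediate. The only point requiring a little care is the application of the theorem to $\Lambda_A$, whose classical register sits on Alice rather than Bob; I would make this rigorous via the swap symmetry of $P_+$ as above (equivalently, by re-running the proof of Thm.~\ref{thm:main} with the two factors interchanged). Everything else is a one-line bookkeeping check that the coefficients $p_{ij}$ form a probability distribution.
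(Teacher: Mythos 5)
Your proposal is correct and follows essentially the same route as the paper: identify both local channels as measurement maps via Thm.~\ref{thm:main} (the paper reads off the POVM elements $E_i^A=dp_i^A(\sigma_i^B)^T$, $E_j^B=dp_j^B(\sigma_j^A)^T$ directly from Eqs.~(\ref{LA})--(\ref{LB})) and then compute $\Lambda_A\otimes\Lambda_B(\varrho_{AB})=\sum_{i,j}\tr(\varrho_{AB}E_i^A\otimes E_j^B)|e_i\rangle\langle e_i|\otimes|f_j\rangle\langle f_j|$. Your explicit swap-symmetry justification for the mirrored application of the theorem to $\Lambda_A$, and the check that the coefficients form a joint probability distribution, are sound elaborations of steps the paper leaves implicit.
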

\begin{proof}
Indeed, from the proof of Thm.~\ref{thm:main}  it follows that $\Lambda_A$ and $\Lambda_B$ are measurement
maps (cf. Eq.~(\ref{MP})) on Alice and Bob sides respectively, defined by POVM elements 
\be
E_i^A:=dp_i^A(\sigma_i^{B})^T,\quad E_j^B:=dp_j^B(\sigma_j^{A})^T. 
\ee
Thus
\beqa
&& \Lambda_A\otimes\Lambda_B(\varrho_{AB})=(\Lambda_A\otimes\mathds{1})(\mathds{1}\otimes\Lambda_B)\varrho_{AB}\nonumber\\
&& =\sum_{i,j}\tr_A\big[E_i^A\tr_B(\varrho_{AB}\mathds{1}\otimes E_j^B)\big]
|e_i\rangle\langle e_i|\otimes |f_j\rangle\langle f_j|\nonumber\\
&& =\sum_{i,j}\tr(\varrho_{AB} E_i^A\otimes E_j^B)|e_i\rangle\langle e_i|\otimes |f_j\rangle\langle f_j|.
\eeqa
\end{proof}

The analysis of state broadcasting may be repeated in the present scenario as well. 
Since $\mathds{1}_{AB}\otimes\Lambda_{A'}^{CQ}\otimes\Lambda_{B'}^{QC}(P_+^{ABA'B'})=
\big[\mathds{1}_A\otimes\Lambda_{A'}^{CQ}(P_+^{AA'})\big]\otimes\big[\mathds{1}_B\otimes\Lambda_{B'}^{QC}(P_+^{BB'})\big]$,
the channel $\Lambda_{A}^{CQ}\otimes\Lambda_{B}^{QC}$ is of a Q$_{A'B}$C$_{AB'}$-type.
From Thm.~\ref{thm:broadcast} it then immediately follows that for any
basis $\{\phi^{AB}_{\alpha}\}$, $\alpha=1,\dots,d_Ad_B$, of $\mathcal H_A\otimes\mathcal H_B$ 
(the spaces $\mathcal H_A$, $\mathcal H_B$
need not be the same now) there exists
a state $\varrho_{*AB}(\phi^{AB})$, build from a stationary distribution of the stochastic matrix (\ref{pphi}) 
\be\label{PAB}
P^{AB}(\phi^{AB})_{\alpha\beta}:=\langle\phi^{AB}_{\beta}|E^A_i\otimes E^B_j \phi^{AB}_{\beta}\rangle,
\ee
$\alpha:=(ij)$, and locally (spectrum-)broadcastable 
through $\Lambda^{(N)}_A\otimes\Lambda^{(N)}_B$ (cf. Eq.~(\ref{LN})).
Note that the basis $\{\phi^{AB}_{\alpha}\}$ need not be a product one in general.

However, for a product basis $\phi^{AB}_{\alpha}\equiv\phi^{A}_{i}\otimes\phi^{B}_{j}$ one can say more.
The matrix $P^{AB}(\phi^{AB})$ is then a product as well: $P^{AB}(\phi^{AB})=P^{A}(\phi^A)\otimes P^{B}(\phi^B)$ and $P^{AB}(\phi^{AB})$ is primitive iff both $P^{A}(\phi^A)$ and $P^{B}(\phi^B)$ are, i.e. 
$\Lambda_A^{(N)}$, $\Lambda_B^{(N)}$ spectrum-broadcast only one state each. In such a case, the product state $\varrho_{*AB}(\phi^{AB})=\varrho_{*A}(\phi^A)\otimes\varrho_{*B}(\phi^B)$ 
is the only state that can be spectrum-broadcasted and there is no local broadcasting of classical correlations---the 
spectrum of $\varrho_{*AB}(\phi^{AB})$ is a product, $\lambda_{*ij}(\phi^{AB})= \lambda_{*i}(\phi^{A})\lambda_{*j}(\phi^{B})$. 
If, however, at least one channel spectrum-broadcasts 
more than one state, then there exists a family of locally spectrum-broadcastable correlated CC states, built analogously as in  Eq.~(\ref{CCfamily}): 
$\varrho_{*AB}(\pi;\phi^A,\phi^B):=\sum_{m,n=1}^{D_A,D_B} \pi_{mn} \varrho_{*A}^{(m)}(\phi^A) \otimes \varrho_{*B}^{(n)}(\phi^B)$. 
A concrete example of such a situation is presented in the Appendix, Eqs.~(\ref{PA},\ref{PB}).
When it comes to local full state broadcasting, by Thm.~\ref{thm:broadcast} it is guaranteed for
$\varrho_{*AB}(e,f)$, which is a CC state in the bases $\{e_i\}$, $\{f_j\}$ (cf. Eqs.~(\ref{LA},\ref{LB})),
in accordance with the general results of Ref.~\cite{broadcasting}. Again, if both matrices $P^A(e)$ and $P^B(f)$  
are primitive, $\varrho_{*AB}(e,f)$ is a product state with no correlations. However, if at least one 
$P^A(e)$ or $P^B(f)$ is not primitive, by the above construction there will be a family
of locally broadcastable correlated CC states $\varrho_{*AB}(\pi;e,f)$.

Before we conclude, let us digress on a nature of some multipartite QC states. 
We assume that e.g. Bob holds two (possibly different)
subsystems and that the joint state is Q$_{A}$C$_{BB'}$, that is
\be\label{QAACBB}
\varrho_{ABB'}=\sum_\alpha p_\alpha \sigma_\alpha^{A}\otimes|e_\alpha\rangle_{BB'}\langle e_\alpha|,
\ee
where $\{e_\alpha\}$ is a basis in $\mathcal H_B\otimes \mathcal H_{B'}$, labeled by $\alpha$. 
It is not necessarily a product basis--for the definition of a  Q$_{A}$C$_{BB'}$ state
it is enough that it is orthonormal. What is interesting is that simultaneously
forcing both reductions $\varrho_{AB}:=\tr_B\varrho_{ABB'}$ and
$\varrho_{AB'}:=\tr_{B'}\varrho_{ABB'}$ to be Q$_{A}$C$_{B}$ and Q$_{A}$C$_{B'}$ respectively:
\beqa
&& \varrho_{AB}=\sum_i \lambda_i \varrho_i^{A}\otimes |e_i\rangle_B\langle e_i|,\\
&& \varrho_{AB'}=\sum_{i'} \pi_{i'} \tau_{i'}^{A}\otimes |f_{i'}\rangle_{B'}\langle f_{i'}|
\eeqa 
does not force $\varrho_{ABB'}$ to be Q$_{AB}$C$_{B'}$ and Q$_{AB'}$C$_{B}$ simultaneously
(we may label such a class by Q$_{A}$C$_{B}$C$_{B'}$), 
i.e. $\{e_\alpha\}$ in (\ref{QAACBB}) still need not be a product basis. As a simple example  
consider $\mathcal H_B=\mathcal H_{B'}=\mathbb C^2$, 
and $\{e_\alpha\}_{\alpha=1,\dots,4}$--the Bell basis.
Then obviously both reductions $\varrho_{AB}$, $\varrho_{AB'}$ are product, 
$1/2 \big(\sum_\alpha p_\alpha\sigma_\alpha\big)\otimes\mathds{1}$, and hence 
trivially Q$_{A}$C$_{B}$ and Q$_{A}$C$_{B'}$, but the whole state 
$\varrho_{ABB'}$ is not Q$_{A}$C$_B$C$_{B'}$.

In some sense a converse of the above observation is also true: there exist Q$_{A}$C$_{BB'}$ states
with a product basis on $BB'$, which are nevertheless not Q$_{A}$C$_{B}$C$_{B'}$, or, equivalently,
both reductions $\tr_B\varrho_{ABB'}$ and $\tr_{B'}\varrho_{ABB'}$ are not Q$_{A}$C$_{B}$ and 
Q$_{A}$C$_{B'}$ respectively.
As an example of such a state consider $\mathcal H_B=\mathcal H_{B'}=\mathbb C^3$, 
and choose as $\{e_\alpha\}_{\alpha=1,\dots,9}$ in (\ref{QAACBB}) the ``nonlocality without entanglement'' 
$3\otimes 3$ basis from Ref.~\cite{nonlocality}. Then both $\tr_B\varrho_{ABB'}$ and $\tr_{B'}\varrho_{ABB'}$
will contain an overcomplete set on $B$ and $B'$ side respectively.

In conclusion, we have provided a refinement of the characterization of entanglement breaking channels
from Ref.~\cite{Michal} to more general quantum correlations 
and connected it to measurement maps, quantum state/correlations broadcasting, and finite Markov chains. 
We have considered two classes of channels---the ones that
(i) break quantum correlations by turning them into the QC form and (ii) that 
fully break quantum correlations by turning them into CC ones.
We have shown that a channel belongs to the first class iff it turns a
maximally entangled state into a QC state or equivalently it is represented by a 
measure-and-prepare scheme,
where the outcomes of a POVM measurement are followed by a preparation of states from 
some specific orthonormal basis. In other words, it is a quantum-to-classical measurement map
(i.e. it gives the state of the Apparatus after tracing the System).

Surprisingly, a similar question in the case of the second class of channels becomes 
even more interesting: the analogy to entanglement-breaking channels now fails and one 
cannot characterize the channels from the second class only by their actions on the maximally entangled state.
However, a characterization from a different perspective seems possible. 
First of all, it turns out that the POVMs, constituting the 
channels, are mutually commuting and arise from a stochastic matrix,
thus making a connection to finite Markov chains. 
Second, the set of bipartite states that are mapped into the CC 
form is more complicated. 

Our analysis of the ability to broadcast quantum 
states and correlations by QC-type channels reveals 
an interesting application of the Perron-Frobenius Theorem. 
The existence of a family of spectrum-broadcastable states and at least one fully broadcastable state, 
even if the POVM measurement is not 
of the von Neumann type, follows from the fact that each finite Markov process
possesses a stationary distribution. This broadcasting scheme, albeit in general substantially weaker than the standard broadcasting
of e.g. Refs.~\cite{broadcasting,state-broadcasting}, surprisingly goes beyond the simple C-NOT scenario.
The connection between broadcasting and finite Markov chains is, to our knowledge, quite unexpected
and will be a subject of a further research.

In fact, perfect broadcasting operations applied so far
corresponded to a scenario where to a given input CC state $\varrho_{AB}=\sum_{i,j} p_{ij} |i\rangle\langle 
i| \otimes |j\rangle\langle j|$ one locally 
applies the generalized C-NOT gates $ U |i\rangle |j\rangle := 
|i\rangle |i \oplus j \rangle $.
Application of the Perron-Frobenius Theorem presented in this work goes beyond this 
simple scenario. 

We believe that the current work opens new perspectives
for an analysis of the measurement problem and state/correlations broadcasting.
Especially interesting seems possibility to study quantum 
decoherence in terms of broadcasting.

We thank  D. Chru\'sci\'nski, A. Grudka, M. Horodecki, M. Ku\'s, A. Rutkowski,  K. \.Zyczkowski
for discussions and remarks,  F. G. S. L. Brand\~ao for a question on the CQ limitation of Thm.~\ref{thm:main},
and J. Wehr for providing useful information on Markov chains.
This work is supported by the ERC Advanced Grant QOLAPS and  
National Science Centre project Maestro DEC-2011/02/A/ST2/00305.
JKK acknowledges the financial support of the QCS and TOQUATA Projects and would like to thank National Quantum Information Centre (KCIK) 
in Gda\'nsk and Center for Theoretical Physics of The Polish Academy of Sciences for hospitality.

\appendix*
\section{}\label{app}
Consider the following example.  Let
\begin{equation}
 P^{(1)}:=
\left[ \begin{array}{clrrr}
 0 & \frac{1}{2} &\frac{1}{2} \\
\frac{1}{2}  & \frac{1}{2} & \frac{1}{2} \\
\frac{1}{2} &  0 & 0   \\
\end{array}
 \right ]
 \label{P-1}
 \end{equation}
for some fixed basis $\{\phi_i\}$ and $P^{(2)}$ be an arbitrary irreducible bistochastic matrix
on $\mathbb R^{3}$, say:
\begin{equation}
P^{(2)}:=
\left[ \begin{array}{clrrr}
 \frac{1}{8} & \frac{3}{8} &\frac{1}{2} \\
\frac{3}{8}  & 0 & \frac{5}{8} \\
\frac{1}{2} &  \frac{5}{8} & 0   \\
\end{array}
 \right ]
 \label{P-2}
\end{equation}
for the same basis. Since we know that any matrix $A \in {\cal M}_{d \times d}(\mathbb R)$ with
non-negative elements is irreducible iff $(\mathds{1} + A)^{d-1}$ has all elements non-negative,
we may easily check that both matrices are irreducible.
The unique Perron vector of $P^{(1)}$ is just
$\vec{\lambda}^{(1)}=[\frac{1}{3},\frac{1}{6},\frac{1}{2}]^{T}$.
The unique eigenvector of the irreducible bistochastic matrix is of course
$\vec{\lambda}^{(2)}=[\frac{1}{3},\frac{1}{3},\frac{1}{3}]^{T}$.
Consider now the stochastic matrix $P:=P^{(1)} \oplus P^{(2)}$  on
$\mathbb R^{6}$
%and define the corresponding  POVM via the identification of the rows of $P'$
%with POVM elements $E_{j}$  according to the formula (\ref{Ej'}). 
Then any state of the form $\varrho_{*AB}(\pi)=\sum_{m,n=1}^{2} \pi_{mn} \varrho_{*}^{(m)}
\otimes \varrho_{*}^{(n)}$ with $\varrho_{*}^{(1)}:=diag[\frac{1}{3},\frac{1}{6},\frac{1}{2}]$ and
$\varrho_{*}^{(2)}:=diag[\frac{1}{3},\frac{1}{3},\frac{1}{3}]$
can be spectrum/full broadcasted by the product of the channels $\Lambda^{(N)}$, 
defined in Eq.~(\ref{LN}).

Even simpler example with two different channels can be constructed to illustrate spectrum/full
broadcasting of correlations. Namely, consider two bistochastic matrices
of the form:
\begin{equation}
P^A:=\left[ \begin{array}{clrrr}
 0 & \frac{1}{2} &\frac{1}{2} \\
 0 & \frac{1}{2} & \frac{1}{2} \\
1 &  0 & 0   \\
\end{array}
 \right ]
 \label{PA}
 \end{equation}
and
\begin{equation}
P^B :=
\left[ \begin{array}{clrrr}
\frac{2}{3} & 0 &\frac{1}{3} \\
\frac{1}{3}& 0 & \frac{2}{3} \\
0 &  1 & 0   \\
\end{array}
\right ]
\label{PB}
\end{equation}
for some basis $\{\phi_i\}$. They are clearly reducible.
Finding their Perron vectors and
defining $\varrho_{*AB}(\pi):=\sum_{m,n=1}^{2} \pi_{mn} \varrho_{*}^{(m)} \otimes
\varrho_{*B}^{(j)}$
as
$\varrho_{*A}^{(1)}:=diag[0,\frac{1}{2},\frac{1}{2}]$,
$\varrho_{*A}^{(2)}:=[1,0,0]$ and
$\varrho_{*B}^{(1)}:=diag[\frac{1}{2},0,\frac{1}{2}]$,
$\varrho_{*B}^{(2)}:=[0,1,0]$, we see that $\varrho_{*AB}(\pi)$
is locally broadcastable by the map $\Lambda_A^{(N)} \otimes \Lambda_B^{(N)}$
where $\Lambda_A$, $\Lambda_B$ are defined
again through (\ref{LN}).

\end{document}